\documentclass[sigconf]{acmart}

\settopmatter{printacmref=false}
\renewcommand\footnotetextcopyrightpermission[1]{}

\usepackage[ruled]{algorithm2e} 
\usepackage{algpseudocode}
\usepackage{amsmath,amsthm,mathtools}
\usepackage{ulem}
\usepackage{enumerate}
\usepackage{xcolor}
\usepackage{array,multirow,booktabs,graphicx}
\usepackage{float}
\usepackage{placeins}

\usepackage[shortlabels]{enumitem}
\usepackage{cleveref}
\usepackage{afterpage}
\usepackage{needspace}

\usepackage{balance}

\usepackage{placeins}  
\usepackage{afterpage}  

\crefname{algocf}{algorithm}{algorithms}
\Crefname{algocf}{Algorithm}{Algorithms}
\crefalias{algocf}{algorithm}

\DeclareMathOperator{\ve}{vec}
\DeclareMathOperator{\di}{diagMat}

\newcommand{\norm}[1]{\left\|#1\right\|}
\newcommand{\n}[1]{\left\|#1\right\|_F^2}
\renewcommand{\t}{^\top}
\newcommand{\Ii}{I_{|\mathcal{I}|}}

\DeclareRobustCommand{\parhead}[1]{\textbf{#1}~}

\AtBeginDocument{%
  }
\copyrightyear{2025}
\acmYear{2025}
\setcopyright{acmlicensed}\acmConference[WWW '25]{Proceedings of the ACM Web Conference 2025}{April 28-May 2, 2025}{Sydney, NSW, Australia}
\acmBooktitle{Proceedings of the ACM Web Conference 2025 (WWW '25), April 28-May 2, 2025, Sydney, NSW, Australia}
\acmDOI{10.1145/3696410.3714680}
\acmISBN{979-8-4007-1274-6/25/04}

\begin{document}

\title{Does Weighting Improve Matrix Factorization for Recommender Systems?}
\titlenote{This paper is published in the proceedings of 'The Web Conference' (WWW) 2025,
  under the Creative Commons Attribution 4.0 International (CC-BY 4.0) license.}

\author{Alex Ayoub}\authornote{This work was done while Alex Ayoub was at Netflix, Inc.}
\affiliation{%
  \institution{University of Alberta}
  \city{Edmonton, AB}
  \country{Canada}
}
\email{aayoub@ualberta.ca}

\author{Samuel Robertson}
\affiliation{%
  \institution{University of Alberta}
  \city{Edmonton, AB}
  \country{Canada}
}
\email{smrobert@ualberta.ca}

\author{Dawen Liang}
\affiliation{%
  \institution{Netflix, Inc.}
  \city{Los Gatos, CA}
  \country{USA}
}
\email{dliang@netfix.com}

\author{Harald Steck}
\affiliation{%
    \institution{Netflix, Inc.}
  \city{Los Gatos, CA}
  \country{USA}
}
\email{hsteck@netflix.com}

\author{Nathan Kallus}
\affiliation{%
  \institution{Netflix, Inc. \& Cornell University}
  \city{New York, NY}
  \country{USA}
}
\email{kallus@cornell.edu}

\begin{abstract}
Matrix factorization is a widely used approach for top-N recommendation and collaborative filtering. When implemented on implicit feedback data (such as clicks), a common heuristic is to upweight the observed interactions. This strategy has been shown to improve performance for certain algorithms. In this paper, we conduct a systematic study of various weighting schemes and matrix factorization algorithms. Somewhat surprisingly, we find that training with unweighted data can perform comparably to—and sometimes outperform—training with weighted data, especially for large models. This observation challenges the conventional wisdom. Nevertheless, we identify cases where weighting can be beneficial, particularly for models with lower capacity and specific regularization schemes. We also derive efficient algorithms for exactly minimizing several weighted objectives that were previously considered computationally intractable. Our work provides a comprehensive analysis of the interplay between weighting, regularization, and model capacity in matrix factorization for recommender systems.
\end{abstract}



\keywords{Recommender Systems, Collaborative Filtering, Matrix Factorization}

\maketitle
\pagestyle{plain} 

\section{Introduction}

In a typical recommender system we observe how users interact with items, and our goal is to recommend each user previously unseen items that they would like. 
In this paper we consider making recommendations from \textit{implicit feedback}, which is the history of previous user interactions (e.g. clicks, purchases, views), as opposed to \textit{explicit feedback} (e.g. user-supplied ratings).
Implicit feedback is organically generated by users interacting with the system, and hence more easily accessible and prevalent \citep{hu2008collaborative}. 

Traditionally, matrix factorization has been the go-to choice for modeling implicit feedback data \citep{hu2008collaborative,pan2008,rendle2009bpr,steck2010,liang2016modeling}. At its core, matrix factorization involves decomposing the user-item interaction matrix into a product of two matrices, so as to capture latent patterns with a low-dimensional representation. Matrix factorization techniques are interesting because of their simplicity and robustness.
Even with the recent emergence of deep learning-based approaches, simple linear models like matrix factorization can, when carefully tuned, compete with and sometimes even outperform their neural counterparts on certain benchmarks \citep{steck2019embarrassingly,ferrari2019we,rendle2022revisiting}.

Implicit feedback presents a unique challenge: how should we handle the positive feedback (e.g. items clicked, watched) vs. the negative or missing feedback, typically represented as ones and zeros, respectively, in the user-item interaction matrix? Intuitively, we expect the items a user has interacted with at least once (positive feedback) to convey more information about that user's preference. If a user has never interacted with an item it could be due to a lack of interest, but it is more likely due to a lack of awareness/exposure \citep{liang2016modeling}.
Consequently, a common approach is to weight the data differently depending on whether their value is zero or one \citep{hu2008collaborative,pan2008}: instances of positive feedback (ones) are typically upweighted to reflect the prior belief that these observations are more informative. Previous studies have empirically demonstrated that such weighting strategies generally improve recommendation accuracy \citep{hu2008collaborative,pan2008,steck2010,liang2016modeling}. 

A notable exception to the success of weighting is the EASE model of \citet{steck2019embarrassingly}, which learns an item-item similarity matrix from unweighted data. Even though EASE is a full rank model, and the main focus of this paper is on low-rank matrix factorization, the fact that EASE can outperform not only weighted matrix factorization, but also powerful deep learning-based approaches, suggests there is more to be uncovered here.

In this paper we systematically study weighting schemes and matrix factorization algorithms\footnote{Our code is publicly available at \url{https://github.com/aa14k/Weighted-MF}} and find, somewhat surprisingly, that
the benefits of weighting are more nuanced and circumstantial than the conventional wisdom suggests.
Specifically, we observe that the best performing methods, as measured by (unweighted) ranking accuracy on publicly available datasets, are generally \textit{large models} trained using \textit{unweighted data}. On the other hand, there are cases where weighting can be beneficial, especially for \textit{small models} with lower capacity. These findings parallel recent observations about deep models \citep{byrd2019effect}.  
We highlight the following as the main contributions of this work:
\begin{enumerate}[leftmargin=*]
    \item We provide experimental evidence that unweighted large linear models often perform comparably to, or better than, their weighted counterparts on standard recommender systems benchmarks, contrary to the conventional wisdom in the literature. 
    \item We derive efficient algorithms to minimize weighted objectives for various matrix factorization models. To the best of our knowledge, most of these objectives were never studied in the literature because they could not be optimized effectively.
    \item We systematically study the interplay between weighting, regularization, and model capacity by running exhaustive experiments over a wide range of methods.
\end{enumerate}

\section{Preliminaries}

In this paper we consider the problem of top-N recommendation \citep{deshpande2004item,ning2011slim} with implicit feedback data \citep{hu2008collaborative,pan2008}. 
We assume access to a (typically sparse) matrix of user-item interactions $X \in \mathbb{R}^{|\mathcal{U}|\times|\mathcal{I}|}$, where $\mathcal{U} = \{1,2,\ldots,|\mathcal{U}|\}$ and $\mathcal{I} = \{1,2,\ldots,|\mathcal{I}|\}$ denote the sets of users and items, respectively. We assume that $|\mathcal{U}| \gg |\mathcal{I}|$, which is often the case in recommendation scenarios. If $X_{ui}$ is positive then we say user $u$ has interacted with item $i$, whereas $X_{ui} = 0$ corresponds to no interactions.

Following standard practice \citep{liang2018variational,steck2019embarrassingly,ferrari2019we}, we ``binarize'' $X$ by setting the positive entries to one while leaving the zero entries unchanged. This is done to facilitate the use of standard ranking metrics such as the truncated normalized discounted cumulative gain (nDCG$@R$) \citep{jarvelin2000ir} and Recall$@R$. In top-N recommendation we provide $N$ ranked recommendations to each user, and success is measured by the alignment between our recommendations and a withheld set of items the user actually interacted with (Recall$@R$), potentially considering the rankings (nDCG$@R$).

In the remainder of this section, we outline the major categories of models that we study, starting with the standard form of weighted matrix factorization.

\parhead{Weighted matrix factorization. } In weighted matrix factorization (WMF) \citep{hu2008collaborative,pan2008}, we learn $d$-dimensional user and item factors $U\in \mathbb{R}^{|\mathcal{U}|\times d}$ and $V\in \mathbb{R}^{|\mathcal{I}|\times d}$, with $d \le |\mathcal{I}|$, via the objective
\begin{equation}\label{eqn:wmf}
    \min_{U \in \mathbb{R}^{|\mathcal{U}|\times d}, V \in \mathbb{R}^{|\mathcal{I}|\times d}}
    \left\lVert \sqrt{W} \odot (X - UV^\top)\right\rVert_F^2 + \lambda \|U\|_F^2\ + \lambda \|V\|_F^2\,,
\end{equation}
where $\lambda \ge 0$ is the regularizer strength, $W \in \mathbb{R}_{>0}^{|\mathcal{U}|\times |\mathcal{I}|}$ is the (positive) weight matrix, $\sqrt{W}$ is computed element-wise, $\norm{\cdot}_F$ denotes the Frobenius norm, and $\odot$ denotes the Hadamard (element-wise) product. Typically $W$ is set such that the weights $W_{ui}$ corresponding to $X_{ui} = 1$ are larger than those corresponding to $X_{ui} = 0$. We follow \citet{hu2008collaborative}, \citet{pan2008}, and \citet{steck2010} by using the weight matrix $W_{ui} = (\alpha-1)X_{ui} + 1$ with $\alpha \ge 1$ for all of our experiments. Given this weighting, WMF can be efficiently optimized via alternating least squares \citep{hu2008collaborative}.

\parhead{Asymmetrical weighted matrix factorization.} 
Asymmetrical weighted matrix factorization (AWMF) \citep{paterek2007improving} is a closely related approach with the objective\footnote{The original A(W)MF of \citet{paterek2007improving} is unweighted.}
\begin{align}\label{eqn:awmf-weightdecay}
    \min_{U,V\in\mathbb{R}^{|\mathcal{I}|\times d}} \left\lVert \sqrt{W} \odot (X - XUV^\top)\right\rVert_F^2 + \lambda\|U\|_F^2 + \lambda \|V\|_F^2\,.
\end{align}
Note that in AWMF both the learned matrices $U$ and $V$ are of size $|\mathcal{I}| \times d$. 
Comparing the first terms in \cref{eqn:awmf-weightdecay,eqn:wmf}, we can interpret AWMF as having added a constraint on the latent user factors. Specifically, the factor for user $u$ in AWMF is implicitly defined as the sum of the item factors (in $U$) of items that $u$ has interacted with.
Compared to the unconstrained latent user factors of WMF, which must be learned for every user separately, the constrained formulation of AWMF results in a considerable reduction in the number of model parameters (the size of $U$ is reduced from $|\mathcal{U}|d$ to $|\mathcal{I}|d$). AWMF also has strong connections to autoencoder models \citep{steck2015gaussian,liang2018variational}.

Another variant of AWMF optimizes the objective
\begin{align}\label{eqn:awmf-dropout}
    \min_{U,V\in\mathbb{R}^{|\mathcal{I}|\times d}} \left\lVert \sqrt{W} \odot (X - XUV^\top)\right\rVert_F^2 + \lambda\|UV^\top\|_F^2\,.
\end{align}
The regularization in \cref{eqn:awmf-weightdecay} is ``weight decay'' style, while \cref{eqn:awmf-dropout} corresponds asymptotically to a ``dropout/denoising'' style regularization \citep{cavazza2018dropout}. 
The unweighted versions of \cref{eqn:awmf-weightdecay,eqn:awmf-dropout} are studied by \citet{steck_autoencoders}, where, without weighting, \cref{eqn:awmf-dropout} is demonstrated to perform better than \cref{eqn:awmf-weightdecay} by a fairly sizable margin.

Finally, we consider a full rank version of \cref{eqn:awmf-dropout}:
\begin{equation}\label{eqn:ease}
    \min_{B \in \mathbb{R}^{|\mathcal{I}| \times |\mathcal{I}|}} \left\lVert \sqrt{W} \odot (X - XB)\right\rVert_F^2 + \lambda \|B\|_F^2\,.
\end{equation}
This is similar to the objective of EASE \citep{steck2019embarrassingly}, except that we have introduced the weight matrix $W$ and dropped the constraint that $B$ has a zero-diagonal. The latter choice was made to be consistent with the other models for a fair comparison, even though this results in slightly reduced accuracy as shown by \citet{steck2019collaborative}. We expect the full rank model to generally outperform its low-rank counterparts. 

Several further comments are in order:
\begin{itemize}[leftmargin=*]
    \item We choose the square loss, $\| \cdot \|_F^2$, as it is standard in the literature when applying matrix factorization to collaborative filtering/top-N recommendations \citep{hu2008collaborative,ning2011slim,steck2019embarrassingly,steck_autoencoders}. \citet{liang2018variational} and \citet{depauw24} observe that training with alternative loss functions, such as the logistic loss, yields better ranking accuracy than the \textit{unweighted} square loss, as the logistic loss reweights the data \citep{ayoubswitching}. However, as our experiments demonstrate, the benefits of reweighting diminish with size for large linear models.
    \item We test various forms of $\ell_2$ regularization. As was observed by \citet{steck_autoencoders,steck2021regularization} and \citet{jin2021towards}, for an item similarity matrix with low rank, i.e. $B=UV^\top$, choosing the right regularization scheme can have a significant impact on ranking accuracy. This necessitates setting the hyperparameter $\lambda$---using a held-out validation set---separately for each regularizer. 
\end{itemize}
Note that we included the weight matrix explicitly in all of the objectives in this section, but the unweighted versions are recovered by setting $W$ to be all ones. For the objectives in \cref{eqn:awmf-weightdecay,eqn:awmf-dropout,eqn:ease} with a general weight matrix, there was previously no known algorithm capable of finding the closed-form solution\footnote{In the case of \cref{eqn:awmf-weightdecay,eqn:awmf-dropout}, this means the closed-form solution for either $U$ or $V$, given that the other is fixed.} on large-scale recommendation datasets. In the subsequent sections, we dive into the details of our algorithm design. 
\section{Optimization}

In this section, we derive closed-form solutions to the optimization problems given in \cref{eqn:awmf-weightdecay,eqn:awmf-dropout,eqn:ease}. In \cref{sec:alg}, these solutions will be paired with the conjugate gradient method to produce efficient algorithms.

\subsection{Closed-Form Solution: Unregularized}
We show that the objective functions for learning the item similarity matrices $UV^\top$ in \cref{eqn:awmf-dropout,eqn:awmf-weightdecay} and $B$ in \cref{eqn:ease}
 can be solved in closed form when $\lambda = 0$. We present the unregularized case first to simplify the exposition, but the proceeding subsection contains the extension to $\lambda > 0$. 
 
 Our results refute the claim made by \citet{steck2019collaborative} and \citet{jin2021towards} that a closed-form solution to \cref{eqn:ease,eqn:awmf-dropout,eqn:awmf-weightdecay} does not exist for an arbitrary weight matrix $W\in\mathbb{R}_{>  0}^{|\mathcal{U}|\times|\mathcal{I}|}$. Thus our closed-form solution to these weighted objectives resolves an open question.
The derivations require a review of the Kronecker product. 
\begin{definition}[Kronecker product]
    The Kronecker product of matrices $A \in \mathbb{R}^{m\times n}$ and $B \in \mathbb{R}^{p\times q}$ is denoted by $A \otimes B$ and defined to be the block matrix
    \begin{equation*}
        A\otimes B = 
        \begin{bmatrix}
        A_{11}B & A_{12}B & \cdots & A_{1n}B \\
        A_{21}B & A_{22}B & \cdots & A_{2n}B \\
        \vdots  & \vdots  & \ddots & \vdots  \\
        A_{m1}B & A_{m2}B & \cdots & A_{mn}B\
        \end{bmatrix} \in \mathbb{R}^{mp\times nq}\,.
    \end{equation*}
    We refer the reader to Chapter 4.2 of \citet{Horn_Johnson_1991} for a summary of elementary properties of the Kronecker product, some of which will be used below. We will also need the following ``vectorization'' lemma.
    \begin{lemma}[Lemma 4.3.1 of \citet{Horn_Johnson_1991}]\label{lem:vectorization}
        Let matrices $A \in \mathbb{R}^{m\times n},B \in \mathbb{R}^{p\times q}$ and $C \in \mathbb{R}^{m \times q}$ be given and $X \in \mathbb{R}^{n \times p}$ be unknown. The matrix equation
        \begin{equation*}
            AXB = C
        \end{equation*}
        is equivalent to the system of $qm$ equations in $np$ unknowns given by
        \begin{equation*}
            (B^\top \otimes A)\ve(X) = \ve(C)\,.
        \end{equation*}
        That is, $\ve(AXB) = (B^\top \otimes A)\ve(X)$, where $\ve(A)$ is the $mn\times 1$ column vector obtained by stacking the columns of $A$ with the leftmost column on top \footnote{$\ve(A)=A.$flatten("F") in NumPy.}.
    \end{lemma}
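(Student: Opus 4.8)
The plan is to establish the single vectorization identity $\ve(AXB) = (B\t \otimes A)\ve(X)$; once this holds, the claimed equivalence of the matrix equation $AXB = C$ with the linear system $(B\t \otimes A)\ve(X) = \ve(C)$ is immediate, since $\ve$ is a bijection between $\mathbb{R}^{n\times p}$ and $\mathbb{R}^{np}$ (it merely reindexes entries). Thus $AXB = C$ holds if and only if $\ve(AXB) = \ve(C)$, and substituting the identity turns the left-hand side into $(B\t \otimes A)\ve(X)$.

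To prove the identity itself, I would first verify it on rank-one matrices and then extend by linearity. Both maps $X \mapsto \ve(AXB)$ and $X \mapsto (B\t \otimes A)\ve(X)$ are linear in $X$, and every $X \in \mathbb{R}^{n \times p}$ is a linear combination of outer products $uv\t$ with $u \in \mathbb{R}^n$ and $v \in \mathbb{R}^p$ (for instance $X = \sum_{k,l} X_{kl}\, e_k e_l\t$). Hence it suffices to check the identity when $X = uv\t$, after which linearity of both sides closes the general case.

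The rank-one computation rests on two observations. First, stacking the columns of an outer product gives $\ve(uv\t) = v \otimes u$, because the $j$-th column of $uv\t$ is $v_j u$. Second, $A(uv\t)B = (Au)(B\t v)\t$, so applying the outer-product identity once more yields $\ve\!\left(A(uv\t)B\right) = (B\t v) \otimes (Au)$. On the other side, the mixed-product property of the Kronecker product (one of the elementary properties from \citet{Horn_Johnson_1991}) gives $(B\t \otimes A)(v \otimes u) = (B\t v) \otimes (Au)$. Comparing the two expressions establishes the identity for rank-one $X$, and linearity completes the proof.

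The only genuine subtlety is bookkeeping around the column-stacking convention: one must confirm that $\ve(uv\t) = v \otimes u$ rather than $u \otimes v$ under the leftmost-column-on-top definition, and correspondingly that the transpose attaches to $B$ (producing $B\t \otimes A$) rather than to $A$. Everything else—linearity, the factorization $A(uv\t)B = (Au)(B\t v)\t$, and the mixed-product rule—is routine. I expect this convention check to be the main place to exercise care, since swapping the role of either factor would silently flip the result.
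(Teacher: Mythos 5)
Your proposal is correct. Note that the paper itself offers no proof of this lemma---it is quoted verbatim from \citet{Horn_Johnson_1991} and used as a black box---so there is no in-paper argument to compare against. Your route (reduce to rank-one $X = uv\t$ by linearity, use $\ve(uv\t) = v\otimes u$ together with $A(uv\t)B = (Au)(B\t v)\t$, and close with the mixed-product property $(B\t\otimes A)(v\otimes u) = (B\t v)\otimes(Au)$) is a complete and standard derivation; the textbook's own proof instead computes the $k$-th column of $AXB$ as $AXb_k = \sum_j (b_k)_j Ax_j$ and reads off the block row $((b_k)_1 A, \ldots, (b_k)_p A)$ of $B\t\otimes A$ directly, which avoids invoking the mixed-product rule but is otherwise the same bookkeeping. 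You are also right that the only delicate point is the convention check that $\ve(uv\t) = v\otimes u$ (not $u\otimes v$) under the columns-stacked-leftmost-on-top definition, which is what forces the transpose onto $B$; your handling of this is correct.
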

\end{definition}
The vectorization lemma has historically been used to reason about and compute solutions to Sylvester's equation of the form $AX + XB = C$ and the discrete-time Lyapunov equation of the form $AXA^\top - X + B = 0$, which arise naturally when studying linear dynamical systems in control theory. In this context the vectorization lemma ``linearizes'' these seemingly nonlinear matrix equations, and our closed-form solutions are inspired by this technique.

We will employ the following proposition repeatedly in our derivations to avoid repetitive calculations. It uses both $I_m$ to denote the identity matrix of size $m\times m$ and $\di(x)$ to denote the matrix with $x \in \mathbb{R}^m$ on the diagonal and zeros elsewhere.
\begin{proposition}\label{prop:vec-model}
    For matrices $A,W \in \mathbb{R}^{u\times i}$ and $B,C \in \mathbb{R}^{i\times d}$, 
    \begin{equation}\label{prop-eq1}
        \ve(A^\top(W\odot(ABC^\top))C) = (C\otimes A)^\top \bar{W}(C\otimes A)\ve(B)
    \end{equation}
    where $\bar{W} = \di(\ve(W))$. Furthermore, we have
    \begin{equation}\label{prop-eq2}
        \ve((W^\top\odot(CB^\top A^\top))AB) = (AB\otimes I_i)^\top \hat{W}(AB\otimes I_i)\ve(C)\,,
    \end{equation}
    where $\hat W = \di(\ve(W^\top))$. 
\end{proposition}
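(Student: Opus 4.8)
The plan is to prove both identities purely mechanically by alternating two tools: the vectorization lemma (\cref{lem:vectorization}), which converts a triple matrix product into a Kronecker product acting on a vectorized factor, together with the elementary identity $\ve(W \odot M) = \di(\ve(W))\ve(M)$, which says that a Hadamard product becomes a diagonal matrix under (column-major) vectorization. I will also repeatedly invoke the transpose rule $(X \otimes Y)^\top = X^\top \otimes Y^\top$ for Kronecker products.

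For \cref{prop-eq1}, I would first view the outer expression as the sandwich $A^\top M C$ with $M = W \odot (ABC^\top)$, and apply \cref{lem:vectorization} to obtain $\ve(A^\top M C) = (C^\top \otimes A^\top)\ve(M) = (C\otimes A)^\top \ve(M)$. Next I convert the Hadamard product, writing $\ve(M) = \bar{W}\,\ve(ABC^\top)$ with $\bar{W} = \di(\ve(W))$. Finally I apply \cref{lem:vectorization} a second time to the inner product $ABC^\top$, giving $\ve(ABC^\top) = (C \otimes A)\ve(B)$; chaining these three substitutions yields the claim.

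For \cref{prop-eq2}, the structure is the same but with the factors shifted so that $C$ is the isolated unknown. Setting $N = W^\top \odot (CB^\top A^\top)$, I treat $AB$ as a right multiplier and write $N\,AB = I_i\, N\, (AB)$, so \cref{lem:vectorization} gives $\ve(N\,AB) = ((AB)^\top \otimes I_i)\ve(N) = (AB \otimes I_i)^\top \ve(N)$. I then convert the Hadamard product to $\ve(N) = \hat{W}\,\ve(CB^\top A^\top)$ with $\hat{W} = \di(\ve(W^\top))$, and apply \cref{lem:vectorization} once more to $I_i\, C\, (B^\top A^\top)$ -- with $C$ placed as the middle factor so that $\ve(C)$ is what appears -- to obtain $\ve(CB^\top A^\top) = (AB \otimes I_i)\ve(C)$, using $(B^\top A^\top)^\top = AB$. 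Composing the three steps gives the result.

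I do not expect any deep obstacle here; the entire difficulty is bookkeeping. The two places that require care are: (i) keeping the transposes and the left/right placement straight in each invocation of \cref{lem:vectorization}, since a careless orientation produces $C^\top \otimes A^\top$ where $(C \otimes A)^\top$ is wanted (these agree, but only through the transpose rule, which must be stated explicitly); and (ii) using $\ve(W^\top)$ rather than $\ve(W)$ in the diagonal matrix for \cref{prop-eq2}, which is forced by the fact that the Hadamard product there is taken against $W^\top$ and that vectorization is column-major. The one auxiliary fact worth verifying in passing is the Hadamard-vectorization identity itself, which is immediate by noting that both sides carry $W_{jk}M_{jk}$ in the position corresponding to row $j$, column $k$ under column stacking.
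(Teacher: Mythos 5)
Your proposal is correct and follows essentially the same route as the paper's proof: both apply \cref{lem:vectorization} to the outer sandwich, convert the Hadamard product via $\ve(W\odot M)=\di(\ve(W))\ve(M)$, and apply \cref{lem:vectorization} once more to the inner product, with the same care about the transpose rule and the use of $\ve(W^\top)$ in the second identity. No gaps.
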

\begin{proof}
    We begin with \cref{prop-eq1}, and applying \cref{lem:vectorization} produces
    \begin{align*}
        \MoveEqLeft \ve(A^\top(W\odot(ABC^\top))C)\\
        &= (C\otimes A)^\top \ve(W\odot(ABC^\top)) \\
        &= (C\otimes A)^\top (\ve(W)\odot\ve(ABC^\top))\\
        &= (C\otimes A)^\top \di(\ve(W)) \ve(ABC^\top) \\
        &= (C\otimes A)^\top \bar{W} (C\otimes A)\ve(B)\,,
    \end{align*}
    where the first and last equalities use \cref{lem:vectorization} and the third equality uses the fact that $x\odot y = \di(x)y$ for vectors $x,y$. 

    Now we turn our attention to showing \cref{prop-eq2}. Again employing \cref{lem:vectorization}, we have that
    \begin{align*}
        \MoveEqLeft \ve((W^\top\odot(CB^\top A^\top))AB) \\
        &= (AB\otimes I_i)^\top \ve(W^\top\odot(C(AB)^\top)) \\
        &= (AB\otimes I_i)^\top (\ve(W^\top)\odot\ve(C(AB)^\top)) \\
        &= (AB\otimes I_i)^\top \di(\ve(W^\top))\ve(C(AB)^\top) \\
        &= (AB\otimes I_i)^\top \hat{W}(AB\otimes I_i)\ve(C)\,,
    \end{align*}
    where the first and last equalities use \cref{lem:vectorization} and the third equality uses again that $x\odot y = \di(x)y$ for vectors $x,y$.
\end{proof}

With these preliminary results in hand, we are ready to tackle \cref{eqn:ease,eqn:awmf-dropout,eqn:awmf-weightdecay} with $\lambda = 0$. 
\paragraph{\textbf{Solving \cref{eqn:ease}, unregularized:}} We start by solving the optimization problem given in \cref{eqn:ease}, as it is the simplest to solve and thus provides the cleanest presentation of the analysis. Setting the derivative of \cref{eqn:ease} to zero produces
\begin{equation*}
     X^\top\left(W \odot(XB - X) \right) = 0\,.
\end{equation*}
Previous attempts at finding a closed form solution failed due to the presence of the Hadamard (element-wise) product $\odot$, which must be evaluated before the multiplication by $X^\top$ (i.e. $\odot$ is not associative with matrix multiplication). 
We can rearrange the above matrix equation and then apply \cref{prop-eq1} of \cref{prop:vec-model} for
\begin{align*}\label{eqn:ease-calcuations}
    \ve(X^\top(W\odot X))
    &= \ve\left(X^\top\left(W \odot(XB)\right)\right)\\
    &= (I_{|\mathcal{I}|}\otimes X^\top)\bar W(I_{|\mathcal{I}|}\otimes X)\ve(B)\\
    &= H(B,0)\ve(B)\,,
\end{align*}
where we defined\footnote{We slightly abuse notation here by writing $H(B,0)$, since $H$ is not a function of $B$ but is a function of $0$. Later on, 0 will be replaced by $\lambda$.} $H(B,0) = (I_{|\mathcal{I}|}\otimes X^\top)\bar W(I_{|\mathcal{I}|}\otimes X)$.
Rearranging the above equation gives the closed form for $B$,
\begin{equation}\label{eqn:closed-form-ease}
    \ve(B) = H(B,0)^{-1} \ve(X^\top(W\odot X))\,.
\end{equation}
We remark that $H(B,0)$ is guaranteed to be invertible so long as $X$ has full column rank, i.e. $X^\top X$ is invertible.  
Finally, we note that, while a closed form solution for $B$ was given in terms of $W$ and $X$, this method can be used to find a closed form solution for any optimization problem of the form
\begin{equation}\label{eqn:generic-wease}
    \min_{B\in\mathbb{R}^{|\mathcal{I}|\times |\mathcal{I}|}} \left\lVert \sqrt{W} \odot (X' - XB)\right\rVert_F^2\,,
\end{equation}
where $X' \in \mathbb{R}^{|\mathcal{U}|\times |\mathcal{I}|}$, which may be of independent interest. An interesting extension would be to leverage this analysis and derive a closed form solution for the case when $B$ is constrained to be a zero-diagonal matrix; this constraint was enforced by \citet{steck2019embarrassingly} and \citet{ning2011slim}. We believe the use of the vectorization lemma will facilitate such an analysis and we highlight this as potential future work.  
\paragraph{\textbf{Solving \cref{eqn:awmf-dropout,eqn:awmf-weightdecay}, unregularized:}} We now apply \cref{prop:vec-model} to \cref{eqn:awmf-dropout,eqn:awmf-weightdecay} with 
$\lambda = 0$ using a similar procedure. Since we will employ alternating minimization for solving \cref{eqn:awmf-dropout,eqn:awmf-weightdecay}, we first fix $V$ and compute the closed form solution for $U$ and then do the converse. Taking the derivative of \cref{eqn:awmf-dropout} with respect to $U$ and setting it to zero gives
\begin{equation*}
    X^\top(W\odot(XUV^\top - X))V = 0\,.
\end{equation*}
Employing \cref{prop-eq1} of \cref{prop:vec-model} and rearranging the above equation gives
\begin{align*}
    \MoveEqLeft\ve(X^\top(W\odot X)V)\\
    &=  \ve(X^\top(W\odot(XUV^\top))V) \\
    &=  (V\otimes X)^\top\bar{W}(V\otimes X)\ve(U)\\
    &= H_D(U,0)\ve(U) = H_W(U,0)\ve(U) \,,
\end{align*}
where $H_D(U,0) = H_W(U,0) = (V\otimes X)^\top\bar{W}(V\otimes X)$. In the regularized case,  $H_D(U,\lambda)$ will denote the Gram matrix of AWMF with respect to $U$ with dropout, while $H_W(U,\lambda)$ will correspond to weight decay. These quantities agree when $\lambda = 0$, but
in later sections when we consider $\lambda > 0$ the utility of these shorthands will become more apparent.
Rearranging the above equation gives the closed-form expression for $U$,
\begin{align}\label{eqn:awmfU-soln}
    \ve(U)
    = \big((V\otimes X)^\top\bar{W}(V\otimes X)\big)^{-1} \ve(X^\top(W\odot X)V)\,.
\end{align}
So long as both $V$ and $X$ have full column rank, the inverse above is guaranteed to exist.
Repeating the same calculations for $V$ and using \cref{prop-eq2} of \cref{prop:vec-model}, we get that
\begin{equation}\label{eqn:awmfV-soln}
    \ve(V) = \big((XU \otimes I_{|\mathcal{I}|})^\top\hat{W}(XU\otimes I_{|\mathcal{I}|})\big)^{-1} \ve((W^\top\odot X^\top)XU)\,.
\end{equation}
Here $XU$ must have full column rank for the inverse to exist. We will set
$H_D(V,0)=H_W(V,0) 
= (XU \otimes I_{|\mathcal{I}|})^\top\hat{W}(XU\otimes I_{|\mathcal{I}|})$, similarly to before. This concludes our analysis of the case $\lambda = 0$, and we are ready for the generalization to regularization.

\subsection{Closed-Form Solution: Regularized}
In this section, we show that the regularized objective functions for learning $B$, $U$ and $V$ in \cref{eqn:ease,eqn:awmf-dropout,eqn:awmf-weightdecay} can be solved in closed form. \paragraph{\textbf{Solving \cref{eqn:ease}, regularized:}} Taking the derivative of the optimization problem in \cref{eqn:ease} and setting it to be zero, we get
\begin{equation*}
    X^\top\left(W \odot(XB - X) \right) + \lambda B = 0\,.
\end{equation*}
Rearranging the above equation and employing \cref{prop:vec-model}, we write
\begin{align*}
    \MoveEqLeft\ve(X^\top(W\odot X))\\
    &= \ve\left(X^\top\left(W \odot(XB) \right) + \lambda B\right) \\
    &= H(B,0)\ve(B) + \lambda I_{|\mathcal{I}|^2}\ve(B) \\
    &= H(B,\lambda)\ve(B)\,,
\end{align*}
where we defined $H(B,\lambda) =  H(B,0) + \lambda I_{|\mathcal{I}|^2}$.
Thus the closed form expression for $B$ when solving \cref{eqn:ease} with $\lambda > 0$ is
\begin{equation}\label{eqn:cf-ease-reg}
    B = H(B,\lambda)^{-1} \ve(X^\top(W\odot X))\,.
\end{equation}
\paragraph{\textbf{Solving \cref{eqn:awmf-dropout,eqn:awmf-weightdecay}, regularized:}} We now turn our attention to the optimization problems given in \cref{eqn:awmf-dropout,eqn:awmf-weightdecay}, and begin by fixing $V$ and finding the closed form expressions for $U$. Setting the derivative of \cref{eqn:awmf-dropout} with respect to $U$ to zero gives
\begin{equation*}
    X^\top(W\odot(XUV^\top - X))V + \lambda  UV^\top V = 0 \,.
\end{equation*}
Again, rearranging and employing \cref{prop:vec-model} gives
\begin{align*}
    \MoveEqLeft \ve(X^\top(W\odot X)V)\\
    &= \ve(X^\top(W\odot(XUV^\top))V + \lambda UV^\top V)\\
    &= (H_D(U,0) + \lambda (V^\top V\otimes I_{|\mathcal{I}|}))\ve(U) \\
    &= H_D(U,\lambda)\ve(U)\,.
\end{align*}
Thus, when employing dropout style regularization, as in \cref{eqn:awmf-dropout}, we have that 
\begin{equation}\label{eqn:cfU-dropout}
    U_{\text{dropout}} =H_D(U,\lambda)^{-1} \ve(X^\top(W\odot X)V)\,.
\end{equation}
Now taking the derivative of \cref{eqn:awmf-weightdecay} with respect to $U$ and repeating the above calculations gives us the closed-form solution for $U$ when employing a weight decay style of regularization, 
\begin{equation}\label{eqn:cfU-weight-decay}
    U_{\text{weight decay}} = H_W(U,\lambda)^{-1} \ve(X^\top(W\odot X)V)\,,
\end{equation}
where $H_W(U,\lambda) = H_W(U,0) + \lambda I_{d|\mathcal{I}|}$.
Repeating the above calculations but solving for $V$ instead of $U$, \cref{eqn:awmf-dropout} gives 
\begin{equation}\label{eqn:cfV-dropout}
    V_{\text{dropout}} = H_D(V,\lambda)^{-1} \ve((W^\top\odot X^\top)XU)\,,
\end{equation}
where $H_D(V,\lambda) = H_D(V, 0) + \lambda(U^\top U \otimes I_{|\mathcal{I}|})$, and \cref{eqn:awmf-weightdecay} gives
\begin{equation}\label{eqn:cfV-weight-decay}
    V_{\text{weight decay}} = H_W(V,\lambda)^{-1} \ve((W^\top\odot X^\top)XU)\,,
\end{equation}
where $H_W(V,\lambda) = H_W(V,0) + \lambda I_{d|\mathcal{I}|}$. 
With the closed-form solutions for the minimization problems defined in \cref{eqn:ease,eqn:awmf-dropout,eqn:awmf-weightdecay} in place, we now turn our attention to computation.
\section{Computational Considerations}\label{sec:alg}

\begin{algorithm}[t]
\caption{Preconditioned Conjugate Gradient Method}\label{alg:conjugate-gradient}
\KwIn{$A$: symmetric positive definite matrix, $b$: vector, $x_0$: initial guess, $M^{-1}$: preconditioner, $\epsilon$: tolerance, $max\_iter$: maximum iterations}
\KwOut{$x$: solution to $Ax = b$}

$r_0 \leftarrow b - Ax_0$\\
$z_0 \leftarrow M^{-1}r_0$\\
$p_0 \leftarrow z_0$\\

\For{$k \leftarrow 0$ \KwTo $max\_iter - 1$}{
    \If{$\|r_k\|_2 \leq \epsilon$}{
        \Return{$x_k$}
    }
    $\alpha_k \leftarrow (r_k^\top z_k)/(p_k^\top A p_k)$\\
    $x_{k+1} \leftarrow x_k + \alpha_k p_k$\\
    $r_{k+1} \leftarrow r_k - \alpha_k A p_k$\\
    $z_{k+1} \leftarrow M^{-1}r_{k+1}$\\
    $\beta_k \leftarrow (r_{k+1}^\top z_{k+1})/(r_k^\top z_k)$\\
    $p_{k+1} \leftarrow z_{k+1} + \beta_k p_k$\\
}
\Return{$x_k$}
\end{algorithm}

In this section, we address some of the computational issues arising from the closed-form solutions given in the previous section, and detail efficient algorithms for computing these solutions. The main difficulty is that the memory requirements for computing the Gram matrices $H$, $H_D$ and $H_W$ can be prohibitively large. For example, suppose one wants to compute $H_D(U,\lambda)$ on the MovieLens $20$ Million (ML-
$20$M) dataset \citep{ml-20m}, which has $136,677$ users and $20,108$ items (movies) with around $10$ million interactions. A quick calculation gives us that the sparse ``binarized'' click matrix for ML-$20$M, i.e. $X$, has density $0.003$, meaning about $0.3$\% of the matrix $X$ is nonzero. If we let $d=10$ in \cref{eqn:awmf-dropout,eqn:awmf-weightdecay} and let $V \in \mathbb{R}^{|\mathcal{I}|\times d}$ be a dense matrix, then storing $V\otimes X$, which is required to compute $H_U$, would require more than $1,000$GB of memory.

We now give an efficient algorithm for solving \cref{eqn:cf-ease-reg,eqn:cfU-dropout,eqn:cfV-dropout,eqn:cfU-weight-decay,eqn:cfV-weight-decay}. The design is based on the following two key insights:
\begin{itemize} [leftmargin=*]
    \item The regularized Gram matrices ($H, H_D$ and $H_W$) corresponding to the aforementioned equations are symmetric positive definite.
    \item While computing and storing these matrices is prohibitively expensive, computing and storing their product with a vector is (relatively) inexpensive, and indeed tractable for problems of the scale considered in our experiments.
\end{itemize}
 In order to understand the second insight more concretely, define a matrix $P \in \mathbb{R}^{|\mathcal{I}|\times|\mathcal{I}|}$ and let $p = \ve(P)$. Then \cref{prop-eq1} of \cref{prop:vec-model} reveals the equivalence between computing $H(B,\lambda)p$ and
\begin{equation*}
    \ve\left(X^\top (W\odot(XP)) + \lambda P\right)\,.
\end{equation*}
If instead we have $P\in \mathbb{R}^{|\mathcal{I}|\times d}$ then $H_D(U,\lambda)p$, $H_D(V,\lambda)p$, $H_W(U,\lambda)p$ and $H_W(V,\lambda)p$ can be computed efficiently by invoking \cref{prop-eq1,prop-eq2} of \cref{prop:vec-model} in an analogous way.

Therefore, since the Gram matrices are symmetric positive definite and their products with arbitrary vectors are inexpensive to compute, the natural algorithm for solving \cref{eqn:cf-ease-reg,eqn:cfU-dropout,eqn:cfV-dropout,eqn:cfU-weight-decay,eqn:cfV-weight-decay} is the conjugate gradient method of \citet{cgmethod} with preconditioning \citep{axelsson1972generalized}, which is detailed in \cref{alg:conjugate-gradient}. The conjugate gradient method is an iterative method for solving positive definite linear systems of equations. The main computational bottleneck in \cref{alg:conjugate-gradient} is computing matrix-vector products but, as we argued above, this can be done (relatively) cheaply. Another pleasing property of \cref{alg:conjugate-gradient} is that it converges to the solutions of \cref{eqn:cf-ease-reg,eqn:cfU-dropout,eqn:cfV-dropout,eqn:cfU-weight-decay,eqn:cfV-weight-decay} in \textit{at most} $|\mathcal{I}|d$ iterations, as stated in Theorem 5.1 of \citet{nocedal1999numerical}. 

However, \cref{alg:conjugate-gradient} may identify a solution in fewer than $|\mathcal{I}|d$ iterations when the regularized Gram matrix ($A$ in the algorithm) is well behaved. Specifically, let the conditioning number of $A$ be defined by $\kappa(A) = \lambda_{\max}(A)/\lambda_{\min}(A)$, where $\lambda_{\max}(A)$ and $\lambda_{\min}(A)$ are the maximum and minimum eigenvalues of $A$, respectively. Then the values of $x_k$ for $k = 0,\ldots,max\_iter-1$, as computed by \cref{alg:conjugate-gradient}, satisfy
\begin{equation*}
    \|x_k - x\|_A \leq 2\exp \left(-\frac{2k}{\kappa(A)}\right)\|x_0-x\|_A\,.
\end{equation*}
 Thus \cref{alg:conjugate-gradient} converges exponentially fast to the solution of any positive definite linear system with a reasonably small conditioning number. 

In order to control the conditioning number of the various Gram matrices, we employ the preconditioner $M^{-1}$ to speed up the convergence of \cref{alg:conjugate-gradient} without increasing the computational complexity. At a high level, preconditioning transforms the linear system $Ax=b$ to the linear system $M^{-1}Ax=M^{-1}b$ for an invertible matrix $M$. The goal is to choose $M$ so that $M^{-1}A$ has a lower conditioning number than $A$, and thus the conjugate gradient method has a faster convergence rate on the transformed system. 
Empirically we found that the Gram matrix corresponding to $W=1$ was performant, e.g. for $H(B,\lambda)$ we would use $M^{-1}=((\Ii\otimes X^\top X) + \lambda I_{|\mathcal{I}|^2})^{-1} = \Ii\otimes (X^\top X+\lambda \Ii)^{-1}$. Intuitively, this works because the conditioning number of $M^{-1}H$ is of order $\max(W)$, which in our experiments is controlled by $\alpha$. The preconditioner $M^{-1}$ corresponding to \cref{eqn:cf-ease-reg,eqn:cfU-dropout,eqn:cfV-dropout,eqn:cfU-weight-decay,eqn:cfV-weight-decay} can be computed efficiently via \cref{prop:vec-model}.
\section{Numerical Experiments}\label{sec:numerical-experiments}
\renewcommand{\arraystretch}{1.5}

\begin{table*}
\caption{Ranking accuracies (Recall$@20$, Recall$@50$ and nDCG$@100$, with standard errors of 0.002, 0.001, 0.001 for ML-20M, Netflix, and MSD, respectively) for linear models of rank $d=1000$ trained with different regularizers and weightings $\alpha$. For each model we report results for both the best-performing weighted model (and its weight $\alpha\neq1$) and the best-performing unweighted model ($\alpha=1$). In each dataset (ML-20M, Netflix, MSD), the higher value for a given metric is boldfaced; in case of a tie both entries are boldfaced (and that metric does not contribute to the win count). The “Wins” column (formatted like the dataset groups) reports, out of 9 metrics, how many times each method wins.}
\label{tab-d1000}
\resizebox{\textwidth}{!}{
\begin{tabular}{lrrrrcrrrcrrrcc}
 & & \multicolumn{4}{c}{{\bf ML-20M}} & \multicolumn{4}{c}{{\bf Netflix}} & \multicolumn{4}{c}{{\bf MSD}} & \multicolumn{1}{c}{{\bf Wins}} \\
\cmidrule(lr){3-6} \cmidrule(lr){7-10} \cmidrule(lr){11-14} \cmidrule(lr){15-15}
 & & Recall & Recall & nDCG & \multirow{2}{*}{$\alpha$} & Recall & Recall & nDCG & \multirow{2}{*}{$\alpha$} & Recall & Recall & nDCG & \multirow{2}{*}{$\alpha$} &  \\
\multicolumn{2}{l}{{\bf Model}} & @20 \hfill{} & @50 \hfill{} & @100\hfill{} & & @20 \hfill{} & @50 \hfill{} & @100 \hfill{} & & @20 \hfill{}   & @50 \hfill{} & @100 &  & \\
\cmidrule(lr){1-2} \cmidrule(lr){3-6} \cmidrule(lr){7-10} \cmidrule(lr){11-14} \cmidrule(lr){15-15}
1. & $\n{\sqrt W\odot (X - XUV\t)} + \lambda(\n{U} + \n{V})$
    & 0.344          
    & \textbf{0.476} 
    & 0.378          
    & 2 
    & \textbf{0.311} 
    & \textbf{0.395} 
    & \textbf{0.345} 
    & 5 
    & \textbf{0.228} 
    & \textbf{0.314} 
    & \textbf{0.277} 
    & 10
    & 7/9 \\
   & $\n{X - XUV\t} + \lambda(\n{U} + \n{V})$
    & \textbf{0.348} 
    & 0.466         
    & \textbf{0.381} 
    & 1 
    & 0.302         
    & 0.376         
    & 0.334         
    & 1 
    & 0.189         
    & 0.251         
    & 0.230         
    & 1
    & 2/9 \\[2mm]
2. & $\n{\sqrt W\odot (X - UV\t)} + \lambda(\n{U} + \n{V})$
    & \textbf{0.377} 
    & \textbf{0.512} 
    & \textbf{0.409} 
    & 3
    & 0.292         
    & 0.376         
    & 0.329         
    & 2
    & \textbf{0.268} 
    & \textbf{0.371} 
    & \textbf{0.322} 
    & 21
    & 6/9 \\
   & $\n{X - UV\t} + \lambda(\n{U} + \n{V})$
    & 0.328         
    & 0.443         
    & 0.360         
    & 1
    & \textbf{0.318} 
    & \textbf{0.406} 
    & \textbf{0.352} 
    & 1
    & 0.189         
    & 0.251         
    & 0.231         
    & 1
    & 3/9 \\[2mm]
3. & $\n{\sqrt W\odot (X - XUV\t)} + \lambda\n{UV\t}$
    & 0.355         
    & 0.491         
    & 0.388         
    & 2
    & \textbf{0.337} 
    & \textbf{0.419} 
    & \textbf{0.370} 
    & 2
    & \textbf{0.227} 
    & \textbf{0.311} 
    & \textbf{0.277} 
    & 10
    & 4/9 \\
   & $\n{X - XUV\t} + \lambda\n{UV\t}$
    & \textbf{0.378} 
    & \textbf{0.511} 
    & \textbf{0.407} 
    & 1
    & \textbf{0.337} 
    & 0.417         
    & 0.369         
    & 1
    & 0.216         
    & 0.296         
    & 0.266         
    & 1
    & 5/9 \\[2mm]
4. & $\n{\sqrt W\odot (X - XUV\t)} + \lambda(\n{XU} + \n{V})$
    & \textbf{0.356} 
    & \textbf{0.491} 
    & \textbf{0.383} 
    & 2
    & 0.263         
    & 0.355         
    & 0.303         
    & 2
    & \textbf{0.231} 
    & \textbf{0.322} 
    & \textbf{0.283} 
    & 10
    & 6/9 \\
   & $\n{X - XUV\t} + \lambda(\n{XU} + \n{V})$
    & 0.326         
    & 0.441         
    & 0.358         
    & 1
    & \textbf{0.314} 
    & \textbf{0.401} 
    & \textbf{0.348} 
    & 1
    & 0.190         
    & 0.251         
    & 0.231         
    & 1
    & 3/9 \\[2mm]
5.  &\text{Full Rank} \quad $\n{\sqrt W\odot (X - XB)} + \lambda(\n{B})$
    & 0.360         
    & 0.499         
    & 0.389         
    & 2
    & \textbf{0.338} 
    & \textbf{0.421} 
    & \textbf{0.370} 
    & 2
    & 0.277         
    & 0.377         
    & 0.338         
    & 2
    & 3/9 \\
   & $\n{X - XB} + \lambda(\n{B})$
    & \textbf{0.376} 
    & \textbf{0.511} 
    & \textbf{0.407} 
    & 1
    & 0.335         
    & 0.417         
    & 0.368         
    & 1
    & \textbf{0.284} 
    & \textbf{0.384} 
    & \textbf{0.344} 
    & 1
    & 6/9 \\
\cmidrule(lr){1-15}
\end{tabular}}
\end{table*}

\begin{table*}
\caption{Ranking accuracies for linear models of rank $d=100$. }
\label{tab-d100}
\resizebox{\textwidth}{!}{
\begin{tabular}{lrrrrcrrrcrrrcc}
 & & \multicolumn{4}{c}{{\bf ML-20M}} & \multicolumn{4}{c}{{\bf Netflix}} & \multicolumn{4}{c}{{\bf MSD}} & \multicolumn{1}{c}{{\bf Wins}} \\
\cmidrule(lr){3-6} \cmidrule(lr){7-10} \cmidrule(lr){11-14} \cmidrule(lr){15-15}
 & & Recall & Recall & nDCG & \multirow{2}{*}{$\alpha$} & Recall & Recall & nDCG & \multirow{2}{*}{$\alpha$} & Recall & Recall & nDCG & \multirow{2}{*}{$\alpha$} &  \\
\multicolumn{2}{l}{{\bf Model}} & @20 \hfill{} & @50 \hfill{} & @100\hfill{} & & @20 \hfill{} & @50 \hfill{} & @100 \hfill{} & & @20 \hfill{}   & @50 \hfill{} & @100 &  & \\
\cmidrule(lr){1-2} \cmidrule(lr){3-6} \cmidrule(lr){7-10} \cmidrule(lr){11-14} \cmidrule(lr){15-15}
1. & $\n{\sqrt W\odot (X - XUV\t)} + \lambda(\n{U} + \n{V})$ 
    & 0.339           
    & \textbf{0.472}  
    & 0.374           
    & 2 
    & \textbf{0.309}  
    & \textbf{0.392}  
    & \textbf{0.343}  
    & 5 
    & \textbf{0.167}  
    & \textbf{0.241}  
    & \textbf{0.210}  
    & 10 
    & 7/9 \\
   & $\n{X - XUV\t} + \lambda(\n{U} + \n{V})$          
    & \textbf{0.348}  
    & 0.466           
    & \textbf{0.381}  
    & 1 
    & 0.302 
    & 0.376 
    & 0.334 
    & 1 
    & 0.129 
    & 0.182 
    & 0.163 
    & 1 
    & 2/9 \\[2mm]
2. & $\n{\sqrt W\odot (X - UV\t)} + \lambda(\n{U} + \n{V})$  
    & \textbf{0.364} & \textbf{0.497} & \textbf{0.399} & 6 
    & \textbf{0.316} & \textbf{0.396} & \textbf{0.348} & 3 
    & \textbf{0.178} & \textbf{0.265} & \textbf{0.223} & 21 
    & 9/9 \\
   & $\n{X - UV\t} + \lambda(\n{U} + \n{V})$           
    & 0.319 & 0.434 & 0.350 & 1 
    & 0.312 & 0.388 & 0.344 & 1 
    & 0.126 & 0.179 & 0.160 & 1 
    & 0/9 \\[2mm]
3. & $\n{\sqrt W\odot (X - XUV\t)} + \lambda\n{UV\t}$        
    & 0.320  & 0.454  & 0.361  & 2 
    & 0.316  & \textbf{0.396}  & \textbf{0.348}  & 2   
    & \textbf{0.167}  & \textbf{0.239}  & \textbf{0.208}  & 10 
    & 4/9 \\
   & $\n{X - XUV\t} + \lambda\n{UV\t}$                 
    & \textbf{0.336}  & \textbf{0.466}  & \textbf{0.368}  & 1 
    & \textbf{0.318}  & 0.395  & \textbf{0.348}  & 1   
    & 0.132  & 0.186  & 0.167  & 1 
    & 4/9 \\[2mm]
4. & $\n{\sqrt W\odot (X - XUV\t)} + \lambda(\n{XU} + \n{V})$
    & \textbf{0.344}  & \textbf{0.475}  & \textbf{0.374}  & 2 
    & 0.304  & 0.384  & 0.338  & 2 
    & \textbf{0.171}  & \textbf{0.247}  & \textbf{0.213}  & 20 
    & 6/9 \\
   & $\n{X - XUV\t} + \lambda(\n{XU} + \n{V})$         
    & 0.315  & 0.430  & 0.346  & 1 
    & \textbf{0.312}  & \textbf{0.388}  & \textbf{0.343}  & 1 
    & 0.128  & 0.180  & 0.161  & 1 
    & 3/9 \\
\cmidrule(lr){1-15}
\end{tabular}}
\end{table*}

\begin{table*}
\caption{Ranking accuracies for linear models of rank $d=10$.}
\label{tab-d10}
\centering
\resizebox{\textwidth}{!}{
\begin{tabular}{lrrrrcrrrcrrrcc}
 & & \multicolumn{4}{c}{{\bf ML-20M}} & \multicolumn{4}{c}{{\bf Netflix}} & \multicolumn{4}{c}{{\bf MSD}} & \multicolumn{1}{c}{{\bf Wins}} \\
\cmidrule(lr){3-6} \cmidrule(lr){7-10} \cmidrule(lr){11-14} \cmidrule(lr){15-15}
 & & Recall & Recall & nDCG & \multirow{2}{*}{$\alpha$} & Recall & Recall & nDCG & \multirow{2}{*}{$\alpha$} & Recall & Recall & nDCG & \multirow{2}{*}{$\alpha$} &  \\
\multicolumn{2}{l}{{\bf Model}} & @20 \hfill{} & @50 \hfill{} & @100\hfill{} & & @20 \hfill{} & @50 \hfill{} & @100 \hfill{} & & @20 \hfill{}   & @50 \hfill{} & @100 &  & \\
\cmidrule(lr){1-2} \cmidrule(lr){3-6} \cmidrule(lr){7-10} \cmidrule(lr){11-14} \cmidrule(lr){15-15}
1. & $\n{\sqrt W\odot (X - XUV\t)} + \lambda(\n{U} + \n{V})$
    & \textbf{0.305}  
    & \textbf{0.426}  
    & \textbf{0.338}  
    & 5 
    & \textbf{0.259}  
    & \textbf{0.339}  
    & \textbf{0.293}  
    & 2 
    & \textbf{0.086}  
    & \textbf{0.134}  
    & \textbf{0.114}  
    & 20
    & 9/9 \\
   & $\n{X - XUV\t} + \lambda(\n{U} + \n{V})$
    & 0.297         
    & 0.415         
    & 0.332         
    & 1 
    & 0.258         
    & 0.335         
    & 0.290         
    & 1 
    & 0.074         
    & 0.111         
    & 0.099         
    & 1
    & 0/9 \\[2mm]
2. & $\n{\sqrt W\odot (X - UV\t)} + \lambda(\n{U} + \n{V})$
    & \textbf{0.305}  
    & \textbf{0.427}  
    & \textbf{0.339}  
    & 3 
    & 0.258         
    & \textbf{0.340}  
    & \textbf{0.293}  
    & 3 
    & \textbf{0.082}  
    & \textbf{0.130}  
    & \textbf{0.111}  
    & 21
    & 8/9 \\
   & $\n{X - UV\t} + \lambda(\n{U} + \n{V})$
    & 0.289         
    & 0.401         
    & 0.322         
    & 1 
    & \textbf{0.259}  
    & 0.337         
    & 0.292         
    & 1 
    & 0.074         
    & 0.110         
    & 0.099         
    & 1
    & 1/9 \\[2mm]
3. & $\n{\sqrt W\odot (X - XUV\t)} + \lambda\n{UV\t}$
    & \textbf{0.303}  
    & \textbf{0.425}  
    & \textbf{0.336}  
    & 5 
    & \textbf{0.260}  
    & \textbf{0.340}  
    & \textbf{0.293}  
    & 2 
    & \textbf{0.085}  
    & \textbf{0.129}  
    & \textbf{0.111}  
    & 10
    & 9/9 \\
   & $\n{X - XUV\t} + \lambda\n{UV\t}$
    & 0.298         
    & 0.415         
    & 0.331         
    & 1 
    & 0.258         
    & 0.334         
    & 0.290         
    & 1 
    & 0.075         
    & 0.110         
    & 0.098         
    & 1
    & 0/9 \\[2mm]
4. & $\n{\sqrt W\odot (X - XUV\t)} + \lambda(\n{XU} + \n{V})$
    & \textbf{0.293}  
    & \textbf{0.412}  
    & \textbf{0.324}  
    & 2 
    & \textbf{0.260}         
    & \textbf{0.339}  
    & \textbf{0.294}  
    & 2 
    & \textbf{0.083}  
    & \textbf{0.129}  
    & \textbf{0.110}  
    & 20
    & 8/9 \\
   & $\n{X - XUV\t} + \lambda(\n{XU} + \n{V})$
    & 0.290         
    & 0.407         
    & 0.321         
    & 1 
    & \textbf{0.260}  
    & 0.338         
    & 0.293         
    & 1 
    & 0.074         
    & 0.110         
    & 0.098         
    & 1
    & 0/9 \\
\cmidrule(lr){1-15}
\end{tabular}}
\end{table*}

In this section, we apply the models derived in Section 3, as well as the weighted matrix factorization model \citep{hu2008collaborative,pan2008}, to collaborative filtering and recommendation problems. We focus on weighted linear autoencoders to better understand the interplay between \textit{weighting} ($W$), \textit{regularization} ($\lambda$) and \textit{model-dimension} ($d$). 

\subsection{Experimental Setup}

We follow the experimental setup detailed by \citet{liang2018variational}, using their publicly available code, and employing the same preprocessing protocol; the user-by-item matrix $X$ is ``binarized'', i.e. set to $1$ where a user has interacted with an item (e.g., the user watched a movie and gave a rating above 3) and $0$ where a user has not interacted with an item.
We also experiment with the same three standard datasets as \citet{liang2018variational}:
\begin{itemize} [leftmargin=*]
    \item MovieLens $20$ Million (ML-20M) \citep{ml-20m}, which consists of $136,677$ users, $20,108$ items, and about $10$ million interactions after preprocessing;
    \item Netflix Prize (Netflix) \citep{bennett2007netflix}, which consists of $463,435$ users, $17,769$ items, and about $57$ million interactions after preprocessing; and
    \item Million Song Data (MSD) \citep{msd}, which consists of $571,355$ users, $41,140$ items, and about $34$ million interactions after preprocessing.
\end{itemize}

 We compare the following five models:
\begin{enumerate}[leftmargin=*]
    \item Asymmetrical Weighted Matrix Factorization (AWMF) with weight decay \citep{paterek2007improving}. This corresponds to the optimization problem detailed in \cref{eqn:awmf-weightdecay}. The regularization here encourages the entries of $U$ and $V$ to be close to zero.
    \item Weighted Matrix Factorization (WMF) \citep{hu2008collaborative,pan2008}. A linear model that learns a user embedding matrix $U$ and an item embedding matrix $V$ that minimize $\|\sqrt{W}\odot(X-UV^\top)\| + \lambda(\|U\|_F^2 + \|V\|_F^2)$. This method also employs a weight decay style of regularization, encouraging the entries of $U$ and $V$ to be near zero. 
    \item Asymmetrical Weighted Matrix Factorization (AWMF) with dropout \citep{cavazza2018dropout,steck_autoencoders}. This corresponds to the optimization problem detailed in \cref{eqn:awmf-dropout}. The regularization here encourages $U$ to be orthogonal to $V^\top$, that is $UV^\top \approx 0$. This style of regularization was inspired by the regularizer originally employed by the EASE model of \citet{steck2019embarrassingly}.
    \item Asymmetrical Weighted Matrix Factorization (AWMF) with ``data/weight decay''. This model combines the regularization used by WMF with the training objective used by AWMF, i.e. $\min_{U,V}\|\sqrt{W}\odot(X-XUV^\top)\| + \lambda(\|XU\|_F^2 + \|V\|_F^2)$, where $XU$ corresponds to the user embedding in WMF. This encourages learning $U$ such that the elements of $XU$ are close to zero. 
    \item Full Rank AWMF. This corresponds to the optimization problem detailed in \cref{eqn:ease}. This model learns a single full rank matrix $B$ and can be seen as the model induced by letting $U$ and $V$ in AWMF be full rank. 
\end{enumerate}
The model numbers in the above list correspond to the numbers in \cref{tab-d1000,tab-d100,tab-d10}. We determine the optimal training hyperparameters $\lambda$ and $\alpha$ by performing a grid search. All models reported in \cref{tab-d1000,tab-d100,tab-d10} were swept over $\lambda \in \{0.0001,0.01,1,100,10\ 000\}$ and $\alpha \in \{1,2,5,10,20\}$, similarly to the sweep used by \citet{liang2018variational} when computing the optimal hyperparameters for WMF. If the highest validation performance occurred at extreme values of the hyperparameters, i.e. $\lambda \in \{0.0001,10\ 000\}$ or $\alpha = 20$, then we expanded the grid until validation performance dropped. All models in \cref{tab-d1000,tab-d100,tab-d10} were implemented using Python, Numpy and Scipy. Models $1,3,4,$ and $5$ in \cref{tab-d1000} were implementing using \cref{alg:conjugate-gradient} and the preconditioner corresponding to their respective unweighted Gram matrices. WMF (model 2) was computed using the publicly available code provided by \citet{liang2018variational}\footnote{Their code adds one to all the values of $\alpha$ as is consistent with \citet{hu2008collaborative}. This is reflected in \cref{tab-d1000,tab-d100,tab-d10}.}, though we remark that WMF could also be implemented with \cref{alg:conjugate-gradient}. To evaluate the ranking accuracy of the various learned models on the withheld test set, we employed the evaluation scheme given by \citet{liang2018variational} as well as the normalized Discounted Cumulative Gain (nDCG$@100$) and Recall ($@20$ and $@50$) detailed therein. 

\subsection{Experimental Results} 
\Cref{tab-d1000,tab-d100,tab-d10} give the ranking accuracies obtained by various models with matrix ranks $d=1000,100$ and $10$, respectively, and for the various regularization and weighting schemes detailed in our paper. \Cref{tab-d1000} also contains the ranking accuracies for the full rank model where $d = |\mathcal{I}|$. 

First, let us discuss the main finding of our paper: that large linear models trained with unweighted data perform comparably to -- and sometimes outperform -- the best performing weighted methods. Specifically, we find that unweighted methods with large model rank either get the best performance or have performance within the standard error of the best performing weighted method. We can see from \cref{tab-d10} that weighting generally benefits models with lower rank. However, as the model rank gets larger, \cref{tab-d1000} shows that weighting begins to hurt model performance in terms of the ranking accuracies. For the full rank model, we see that the unweighted model either beats or nearly matches the performance (Netflix Recall@50 is outside the standard error) of its weighted variant across all three datasets and ranking metrics. We also note that $\alpha=2$ (the smallest weighting in our grid search) performed best for the weighted full rank model. That the ranking accuracy increases with model rank is consistent with the findings of \citet{steck_autoencoders}, who highlight a similar trend for unweighted models. 

The results on MSD might appear to be an outlier at first glance, as all the weighted low rank models outperform the unweighted ones. However, the performance of the full rank model provides a fuller picture: i) The \textit{unweighted} full rank model outperforms the best low rank model (WMF with $d=1000$) by a wide margin; ii) The significantly more rapid performance degradation with smaller $d$ suggests that the underlying structure of the MSD dataset is much more complex than the other two. This is in line with the observation of \citet{steck2021regularization} and \citet{steck2021negative} that MSD has a much longer-tailed distribution. This is also consistent with our observation that weighting matters less when the capacity of a low rank model is increased towards that of a full rank one. 

Now let us turn our attention to the different $\ell_2$ regularizers. Our findings corroborate those of \citet{steck_autoencoders} and \citet{steck2021regularization}, namely that dropout (line 3 \cref{tab-d1000}) seems to outperform weight decay (line 1 \cref{tab-d1000}) in terms of ranking accuracy for the unweighted models with large ranks. When the rank is smaller, as in \cref{tab-d100,tab-d10}, our findings seems to suggest that neither method is conclusively better. Specifically, when the rank is $d=100$, dropout outperforms weight decay on Netflix but underperforms it on ML-20M. 

The hybrid regularizer given in line 4 of the tables was chosen to help elucidate why weighting benefits WMF. As mentioned earlier, if we let $XU$ be the user embedding matrix in WMF then we end up with the hybrid regularizer, which encourages learning a $U$ such that $XU\approx0$. Thus we expect upweighting the nonnegative elements of $X$ to benefit this method, as we do not want to predict zero for the nonnegative entries of $X$. For this method, the best choices of $\lambda$ were values very close to zero, i.e., $10^{-6}$ and $10^{-8}$. This is expected as we do not want to predict zero when $X_{ui}=1$. As $\lambda$ took larger values, it was often very beneficial to the method to use larger weights to compensate for the larger values of $\lambda$. We believe these experimental findings corroborate our intuition that weighting benefited WMF precisely because WMF was implicitly regularizing the user embedding to predict zero when $X$ took nonnegative values. 

In summary, we find that, as the model rank of AWMF grows, the effects of upweighting the positive values of $X$ are detrimental to the performance of the model, as measured by ranking-accuracy. Weighting does indeed help WMF, which is in accordance with the previous literature \citep{hu2008collaborative,pan2008,steck2010,saito2020unbiased}--- and we believe our experimental findings for AWMF with data/weight decay (see previous paragraph) give some intuition into why this phenomena occurs. 
\section{Related Work}

Weighted matrix factorization for implicit feedback data was introduced by \citet{hu2008collaborative} and \citet{pan2008}. 
Based on these seminal works, in many practical applications, the observed/positive user-item interactions were upweighted relative to the unobserved/missing user-item interactions, all with the same weight, resulting in a single scalar weight as an additional hyperparameter that can be tuned during training by optimizing the (unweighted) ranking metric on the validation set. Over the years, it has become the common belief that upweighting the positive user-item interactions is key to obtaining improved recommendation accuracy. This belief was further bolstered by multiple intuitive interpretations of the benefit of upweighting the positive user-item interactions
: (i) it corresponds to reduced uncertainty about the positive data points, and (ii) due to the sparsity of the positive data points, up-weighting them makes the data more balanced, a common practice when dealing with unbalanced data sets. 
Instead of upweighting the positive samples, the negative data-points are down-sampled for computational efficiency in practice. The recent paper of \citet{petrov2023overconfidence} proposes a combination of sampling and weighting.

The matrix factorization model was typically trained iteratively, either by stochastic gradient descent or alternating least squares (e.g. \citet{hu2008collaborative,pan2008, rendle2022revisiting}). As we noticed in the current work, finding a near-optimal solution is not trivial even for such bi-linear models, but it is crucial for better understanding the effect of weighting the data, as shown in this paper. 

While the matrix factorization model decomposes the user-item interaction matrix into a product of low-rank user and item latent factors, similar to (unweighted) singular value decomposition (or pureSVD in the recommendation literature \citep{cremonesi2010puresvd}), other variants were proposed in the literature as well, most notably asymmetric MF \citep{paterek2007improving}, wherein two (different) item-factor matrices are learned, reducing the number of model-parameters considerably (in case of $|\mathcal{U}| \gg |\mathcal{I}|$). Training this model with a weighting scheme has also typically been found to be beneficial.


In contrast, the full rank model called SLIM \citep{ning2011slim}, and its simplified variant called EASE \citep{steck2019embarrassingly} are trained on unweighted data, yet obtain competitive results. These models also use a different variant of $\ell_2$-norm regularization, more closely related to dropout/denoising instead of weight decay \citep{steck_autoencoders}. 

Finally, it is argued by \citet{byrd2019effect} that the effect of weighting diminishes as the capacity of the (deep learning) model increases to the point that it is able to fit all the data, e.g. in a classification task where the model has enough capacity to separate the data. Interestingly, this paper also argues that weight decay regularization may prevent large-norm solutions, which hence may require weighted training---in contrast, dropout style regularization does not have this effect, and hence may require less/no weighting. All of this related work indicates that there is more to uncover regarding the benefits of weighting, and how they are affected by model capacity as well as the regularization, which motivated this paper.

\section{Conclusion}

Our systematic study of weighting schemes for matrix factorization models on implicit feedback data revealed several key findings, which hinge upon the  efficient algorithms derived in this paper for optimizing various weighted objectives. 

The observed experimental results challenge the conventional wisdom of upweighting the observed interactions:
 the benefits of weighting diminish with increasing model rank, and, as model capacity becomes sufficiently large, unweighted methods often outperform their weighted counterparts.
Apart from that,  the choice of regularization scheme interacts with weighting, with dropout-style regularization generally outperforming weight decay for unweighted high rank models, as also found by \citet{steck_autoencoders}. 
 These findings have important implications for recommender system design, suggesting that practitioners should carefully evaluate the necessity of weighting for their specific models and datasets.

Future work could explore the theoretical foundations of these observations and investigate their applicability to deep learning models for recommendation. 
\bibliographystyle{ACM-Reference-Format}
\balance
\bibliography{sample-base}
\FloatBarrier
\appendix
\clearpage
\onecolumn
\section{Tables Grouped by Dataset}

In this section we present the results given in \cref{sec:numerical-experiments} organized by dataset instead of model rank.

\begin{table*}[h]
\caption{Ranking accuracies on ML-20M for various linear models. Columns correspond to model rank \(d\in\{10,100,1000\}\). For each model the top row shows the weighted variant (\(\alpha\neq1\)) and the bottom row the unweighted variant (\(\alpha=1\)); the “Model” column shows the corresponding equation.}
\label{tab-ML20M-reorg}
\resizebox{\textwidth}{!}{
\begin{tabular}{lrrrrcrrrcrrrcc}
 & & \multicolumn{4}{c}{{\bf$d=10$}} & \multicolumn{4}{c}{{\bf $d=100$}} & \multicolumn{4}{c}{{\bf $d=1000$}} \\
\cmidrule(lr){3-6} \cmidrule(lr){7-10} \cmidrule(lr){11-14} \cmidrule(lr){15-15}
 & & Recall & Recall & nDCG & \multirow{2}{*}{$\alpha$} & Recall & Recall & nDCG & \multirow{2}{*}{$\alpha$} & Recall & Recall & nDCG & \multirow{2}{*}{$\alpha$}   \\
\multicolumn{2}{l}{{\bf Model}} & @20 \hfill{} & @50 \hfill{} & @100\hfill{} & & @20 \hfill{} & @50 \hfill{} & @100 \hfill{} & & @20 \hfill{}   & @50 \hfill{} & @100   \\
\cmidrule(lr){1-2} \cmidrule(lr){3-6} \cmidrule(lr){7-10} \cmidrule(lr){11-14} \cmidrule(lr){15-15}
\multirow{2}{*}{1} 
  & \(\n{\sqrt W\odot (X - XUV\t)} + \lambda(\n{U} + \n{V})\) 
  & 0.305 & 0.426 & 0.338 & 5 
  & 0.339 & 0.472 & 0.374 & 2 
  & 0.344 & 0.476 & 0.378 & 2 \\
  & \(\n{X - XUV\t} + \lambda(\n{U} + \n{V})\) 
  & 0.297 & 0.415 & 0.332 & 1 
  & 0.348 & 0.466 & 0.381 & 1 
  & 0.348 & 0.466 & 0.381 & 1 \\[2mm]
\multirow{2}{*}{2} 
  & \(\n{\sqrt W\odot (X - UV\t)} + \lambda(\n{U} + \n{V})\)
  & 0.305 & 0.427 & 0.339 & 3 
  & 0.364 & 0.497 & 0.399 & 6 
  & 0.377 & 0.512 & 0.409 & 3 \\
  & \(\n{X - UV\t} + \lambda(\n{U} + \n{V})\)
  & 0.289 & 0.401 & 0.322 & 1 
  & 0.319 & 0.434 & 0.350 & 1 
  & 0.328 & 0.443 & 0.360 & 1 \\[2mm]
\multirow{2}{*}{3} 
  & \(\n{\sqrt W\odot (X - XUV\t)} + \lambda\n{UV\t}\)
  & 0.303 & 0.425 & 0.336 & 5 
  & 0.320 & 0.454 & 0.361 & 2 
  & 0.355 & 0.491 & 0.388 & 2 \\
  & \(\n{X - XUV\t} + \lambda\n{UV\t}\)
  & 0.298 & 0.415 & 0.331 & 1 
  & 0.336 & 0.466 & 0.368 & 1 
  & 0.378 & 0.511 & 0.407 & 1 \\[2mm]
\multirow{2}{*}{4} 
  & \(\n{\sqrt W\odot (X - XUV\t)} + \lambda(\n{XU} + \n{V})\)
  & 0.293 & 0.412 & 0.324 & 2 
  & 0.344 & 0.475 & 0.374 & 2 
  & 0.360 & 0.499 & 0.389 & 2 \\
  & \(\n{X - XUV\t} + \lambda(\n{XU} + \n{V})\)
  & 0.290 & 0.407 & 0.321 & 1 
  & 0.315 & 0.430 & 0.346 & 1 
  & 0.376 & 0.511 & 0.407 & 1 \\[2mm]
\midrule
\multirow{2}{*}{5} 
  & Full Rank: \(\n{\sqrt W\odot (X - XB)} + \lambda(\n{B})\)
  & ---   & ---   & ---   & --- 
  & ---   & ---   & ---   & --- 
  & 0.360 & 0.499 & 0.389 & 2 \\
  & \(\n{X - XB} + \lambda(\n{B})\)
  & ---   & ---   & ---   & --- 
  & ---   & ---   & ---   & --- 
  & 0.376 & 0.511 & 0.407 & 1 \\
\bottomrule
\end{tabular}
}
\end{table*}
\begin{table*}[h]
\caption{Ranking accuracies on Netflix.}
\label{tab-Netflix-reorg}
\resizebox{\textwidth}{!}{
\begin{tabular}{lrrrrcrrrcrrrcc}
 & & \multicolumn{4}{c}{{\bf$d=10$}} & \multicolumn{4}{c}{{\bf $d=100$}} & \multicolumn{4}{c}{{\bf $d=1000$}} \\
\cmidrule(lr){3-6} \cmidrule(lr){7-10} \cmidrule(lr){11-14} \cmidrule(lr){15-15}
 & & Recall & Recall & nDCG & \multirow{2}{*}{$\alpha$} & Recall & Recall & nDCG & \multirow{2}{*}{$\alpha$} & Recall & Recall & nDCG & \multirow{2}{*}{$\alpha$}   \\
\multicolumn{2}{l}{{\bf Model}} & @20 \hfill{} & @50 \hfill{} & @100\hfill{} & & @20 \hfill{} & @50 \hfill{} & @100 \hfill{} & & @20 \hfill{}   & @50 \hfill{} & @100   \\
\cmidrule(lr){1-2} \cmidrule(lr){3-6} \cmidrule(lr){7-10} \cmidrule(lr){11-14} \cmidrule(lr){15-15}
\multirow{2}{*}{1} 
  & \(\n{\sqrt W\odot (X - XUV\t)} + \lambda(\n{U} + \n{V})\) 
  & 0.259 & 0.339 & 0.293 & 2 
  & 0.309 & 0.392 & 0.343 & 5 
  & 0.311 & 0.395 & 0.345 & 5 \\
  & \(\n{X - XUV\t} + \lambda(\n{U} + \n{V})\) 
  & 0.258 & 0.335 & 0.290 & 1 
  & 0.302 & 0.376 & 0.334 & 1 
  & 0.302 & 0.376 & 0.334 & 1 \\[2mm]
\multirow{2}{*}{2} 
  & \(\n{\sqrt W\odot (X - UV\t)} + \lambda(\n{U} + \n{V})\)
  & 0.258 & 0.340 & 0.293 & 3 
  & 0.316 & 0.396 & 0.348 & 3 
  & 0.292 & 0.376 & 0.329 & 2 \\
  & \(\n{X - UV\t} + \lambda(\n{U} + \n{V})\)
  & 0.259 & 0.337 & 0.292 & 1 
  & 0.312 & 0.388 & 0.344 & 1 
  & 0.318 & 0.406 & 0.352 & 1 \\[2mm]
\multirow{2}{*}{3} 
  & \(\n{\sqrt W\odot (X - XUV\t)} + \lambda\n{UV\t}\)
  & 0.260 & 0.340 & 0.293 & 2 
  & 0.316 & 0.396 & 0.348 & 2 
  & 0.337 & 0.419 & 0.370 & 2 \\
  & \(\n{X - XUV\t} + \lambda\n{UV\t}\)
  & 0.258 & 0.334 & 0.290 & 1 
  & 0.318 & 0.395 & 0.369 & 1 
  & 0.337 & 0.417 & 0.369 & 1 \\[2mm]
\multirow{2}{*}{4} 
  & \(\n{\sqrt W\odot (X - XUV\t)} + \lambda(\n{XU} + \n{V})\)
  & 0.260 & 0.339 & 0.294 & 2 
  & 0.304 & 0.384 & 0.338 & 2 
  & 0.388 & 0.421 & 0.370 & 2 \\
  & \(\n{X - XUV\t} + \lambda(\n{XU} + \n{V})\)
  & 0.260 & 0.338 & 0.293 & 1 
  & 0.312 & 0.388 & 0.343 & 1 
  & 0.335 & 0.417 & 0.368 & 1 \\[2mm]
  \midrule
\multirow{2}{*}{5} 
  & Full Rank: \(\n{\sqrt W\odot (X - XB)} + \lambda(\n{B})\)
  & ---   & ---   & ---   & --- 
  & ---   & ---   & ---   & --- 
  & 0.338 & 0.421 & 0.370 & 2 \\
  & \(\n{X - XB} + \lambda(\n{B})\)
  & ---   & ---   & ---   & --- 
  & ---   & ---   & ---   & --- 
  & 0.335 & 0.417 & 0.368 & 1 \\
\bottomrule
\end{tabular}
}
\end{table*}

\begin{table*}[h]
\caption{Ranking accuracies on MSD.}
\label{tab-MSD-reorg}
\resizebox{\textwidth}{!}{
\begin{tabular}{lrrrrcrrrcrrrcc}
 & & \multicolumn{4}{c}{{\bf$d=10$}} & \multicolumn{4}{c}{{\bf $d=100$}} & \multicolumn{4}{c}{{\bf $d=1000$}} \\
\cmidrule(lr){3-6} \cmidrule(lr){7-10} \cmidrule(lr){11-14} \cmidrule(lr){15-15}
 & & Recall & Recall & nDCG & \multirow{2}{*}{$\alpha$} & Recall & Recall & nDCG & \multirow{2}{*}{$\alpha$} & Recall & Recall & nDCG & \multirow{2}{*}{$\alpha$}   \\
\multicolumn{2}{l}{{\bf Model}} & @20 \hfill{} & @50 \hfill{} & @100\hfill{} & & @20 \hfill{} & @50 \hfill{} & @100 \hfill{} & & @20 \hfill{}   & @50 \hfill{} & @100   \\
\cmidrule(lr){1-2} \cmidrule(lr){3-6} \cmidrule(lr){7-10} \cmidrule(lr){11-14} \cmidrule(lr){15-15}
\multirow{2}{*}{1} 
  & \(\n{\sqrt W\odot (X - XUV\t)} + \lambda(\n{U} + \n{V})\)
  & 0.086 & 0.134 & 0.114 & 20  
  & 0.167 & 0.241 & 0.210 & 10  
  & 0.228 & 0.314 & 0.277 & 10 \\
  & \(\n{X - XUV\t} + \lambda(\n{U} + \n{V})\)
  & 0.074 & 0.111 & 0.099 & 1  
  & 0.129 & 0.182 & 0.163 & 1  
  & 0.189 & 0.251 & 0.230 & 1 \\[2mm]
\multirow{2}{*}{2} 
  & \(\n{\sqrt W\odot (X - UV\t)} + \lambda(\n{U} + \n{V})\)
  & 0.082 & 0.130 & 0.111 & 21  
  & 0.178 & 0.265 & 0.223 & 21  
  & 0.268 & 0.371 & 0.322 & 21 \\
  & \(\n{X - UV\t} + \lambda(\n{U} + \n{V})\)
  & 0.074 & 0.110 & 0.099 & 1  
  & 0.126 & 0.179 & 0.160 & 1  
  & 0.189 & 0.251 & 0.231 & 1 \\[2mm]
\multirow{2}{*}{3} 
  & \(\n{\sqrt W\odot (X - XUV\t)} + \lambda\n{UV\t}\)
  & 0.085 & 0.129 & 0.111 & 10  
  & 0.167 & 0.239 & 0.208 & 10  
  & 0.227 & 0.311 & 0.277 & 10 \\
  & \(\n{X - XUV\t} + \lambda\n{UV\t}\)
  & 0.075 & 0.110 & 0.098 & 1  
  & 0.132 & 0.186 & 0.167 & 1  
  & 0.216 & 0.296 & 0.266 & 1 \\[2mm]
\multirow{2}{*}{4} 
  & \(\n{\sqrt W\odot (X - XUV\t)} + \lambda(\n{XU} + \n{V})\)
  & 0.083 & 0.129 & 0.110 & 20  
  & 0.171 & 0.247 & 0.213 & 20  
  & 0.231 & 0.322 & 0.283 & 10 \\
  & \(\n{X - XUV\t} + \lambda(\n{XU} + \n{V})\)
  & 0.074 & 0.110 & 0.098 & 1  
  & 0.128 & 0.180 & 0.161 & 1  
  & 0.190 & 0.251 & 0.231 & 1 \\[2mm]
\midrule
\multirow{2}{*}{5} 
  & Full Rank: \(\n{\sqrt W\odot (X - XB)} + \lambda(\n{B})\)
  & ---   & ---   & ---   & ---  
  & ---   & ---   & ---   & ---  
  & 0.277 & 0.377 & 0.338 & 2 \\
  & \(\n{X - XB} + \lambda(\n{B})\)
  & ---   & ---   & ---   & ---  
  & ---   & ---   & ---   & ---  
  & 0.284 & 0.384 & 0.344 & 1 \\[2mm]
\bottomrule
\end{tabular}
}
\end{table*}
\end{document}